\documentclass{l4dc2025}


\usepackage{times}
\usepackage{subcaption}
\usepackage{algorithmic}
\usepackage{url}
\usepackage{mathtools}

\usepackage{amsmath, amssymb, amscd, amsfonts}
\usepackage{graphicx}
\usepackage{hyperref}
\usepackage{enumerate}
\usepackage{xcolor}
\usepackage{float}
\usepackage{array}
\usepackage{lipsum}
\usepackage{multicol}

\usepackage{caption}
\usepackage{nomencl}
\makenomenclature


\title[Fast Stochastic MPC using Affine Disturbance Feedback Gains Learned Offline]{Fast Stochastic MPC using Affine Disturbance Feedback Gains  Learned Offline
}
\usepackage{times}

\coltauthor{\Name{Hotae Lee} \Email{hotae.lee@berkeley.edu}\and
 \Name{Francesco Borrelli} \Email{fborrelli@berkeley.edu}\\
 \addr 2169 Etcheverry Hall, Berkeley, CA, USA}



\begin{document}

\maketitle

\begin{abstract}%
We propose a novel Stochastic Model Predictive Control (MPC) for uncertain linear systems subject to probabilistic constraints. 
The proposed approach leverages offline learning to extract key features of affine disturbance feedback policies, significantly reducing the computational burden of online optimization. Specifically, we employ offline data-driven sampling to learn feature components of feedback gains and approximate the chance-constrained feasible set with a specified confidence level. By utilizing this learned information, the online MPC problem is simplified to optimization over nominal inputs and a reduced set of learned feedback gains, ensuring computational efficiency.

In a  numerical example  the proposed  MPC approach achieves comparable control performance in terms of Region of Attraction (ROA) and average closed-loop costs to classical MPC optimizing over disturbance feedback policies, while delivering a 10-fold improvement in computational speed. 

\end{abstract}

\begin{keywords}%
 Stochastic Model Predictive Control, Feature Extraction, Scenario Approach 
\end{keywords}

\section{Introduction}

 One key challenge in nominal Model Predictive Control (MPC) is the potential for constraint violations under model uncertainty. Robust MPC mitigates this by ensuring constraint satisfaction in worst-case scenarios; however, it often results in overly conservative policies with limited regions of attraction. Stochastic Model Predictive Control (SMPC) offers a more flexible alternative by employing chance constraints, which allow controlled constraint violations and yield less conservative policies~\cite{mesbah2016stochastic}.

Various SMPC methods address these challenges with different levels of conservatism. Techniques using Chebyshev’s inequality guarantee constraint satisfaction for any probability distribution but tend to be overly conservative \cite{cannon2012stochastic, korda2011strongly}. In contrast, methods tailored for specific distributions, such as Gaussian, offer improved performance but lack generality \cite{oldewurtel2008tractable, hewing2020recursively}. Sampling-based approaches handle arbitrary distributions, but their real-time application is hindered by the large sample sizes required, especially if affine disturbance feedback policies are used as decision variables \cite{zhang2013stochastic, hewing2019scenario, lee2023stochastic}. To overcome these limitations, offline scenario-based approaches leverage pre-sampled data to eliminate redundant constraints, reducing computational complexity \cite{lorenzen2017stochastic}. However, these approaches can still become infeasible for high-dimensional decision spaces or when dealing with large sample sets. Our work seeks to address these challenges building on the idea of sampling offline scenarios~\cite{lorenzen2017stochastic, mammarella2022chance} and proposing a technique which is  tractable when affine disturbance feedback policies are used as decision variables.

We propose a fast SMPC using an offline scenario approach that achieves performance comparable to traditional affine disturbance feedback SMPCs. The proposed method uses two key ideas:

\begin{itemize}
    \item 
Extracting features of affine feedback gains that have the highest effect on constraints  violation. This is done by simply applying Singular Value Decomposition (SVD) to the stacked constraint and dynamics matrix. When such features are used as optimization variables in the online SMPC this leads to  drastical reduction of online computation.

\item Deriving an approximate constraint set of in the space of features  which  maintains a specified confidence level for chance constraint satisfaction, simplifying the structure for faster MPC optimization.
\end{itemize}
When both key ideas are employed, the resulting SMPC problem will have fewer number of variables and constraints.

\section{Problem Formulation}

We consider a linear time-invariant (LTI) system with additive uncertainty:
\setlength{\abovedisplayskip}{4pt}
\setlength{\belowdisplayskip}{4pt}
\begin{align}\label{eq:sys}
    x_{t+1} = Ax_t + Bu_t+w_t ,~x_0=x_S, 
\end{align}
where the system matrices $A\in \mathbb{R}^{n\times n}$ and $B \in \mathbb{R}^{n \times m}$ are known, $x_t \in \mathbb{R}^n, u_t \in \mathbb{R}^m, w_t \in \mathbb{R}^m$ denote the state, control input, the uncertainty at time $t$ with a compact support $\mathbb{W}$, respectively.

The system under consideration is subject to chance constraints on  states and inputs described as follows:
\begin{align}
    \mathbb{P}(C x_t \leq d) \geq 1-\epsilon, ~~~
    \mathbb{P}(H_u u_t \leq h_u) \geq 1-\epsilon,
\end{align}
where $C \in \mathbb{R}^{n_{c} \times n} $ and $H_u \in \mathbb{R}^{n_{cu} \times m} $.
Our goal is to design a controller to have the system track a reference signal from a given initial state $x_S$ while satisfying the given chance constraints. 
We are interested in synthesizing a stochastic Model Predictive Control (MPC) by repeatedly solving the following optimal control problem:
\begin{subequations}\label{eq:mpc_formulation}
\begin{align}
    \min_{u_{0|t},u_{1|t}(\cdot),\cdots,u_{N|t}(\cdot)} & \sum_{k=0}^{N} c(\bar{x}_{k|t},\bar{u}_{k|t})+V(\bar{x}_{N+1|t}) \\ 
\mathrm{s.t.} ~~&x_{k+1|t} = Ax_{k|t}+Bu_{k|t}+w_{k|t}, ~~\bar{x}_{k+1|t} = A\bar{x}_{k|t}+B\bar{u}_{k|t}, ~\\
&\mathbb{P}(Cx_{k|t}\leq d)\geq 1-\epsilon, 
~~\mathbb{P}(H_u u_{k|t} \leq h_u)\geq 1-\epsilon, \label{eq:chance_constr}\\
&\bar{u}_{k|t} = u_{k|t}(\bar{x}_{k|t}),~ \bar{x}_{N+1|t} \in \mathcal{X}_F, ~x_{0|t} = \bar{x}_{0|t}= x_t, ~~\forall k \in \{0,1,...,N\}
\end{align}
\end{subequations}
where $\bar{x}_{k|t}$, $x_{k|t}$ denote the nominal state and the predicted state for prediction step $k\in \{0,1,\dots,N+1\}$, obtained from $x_t$ by applying the predicted input policies $\{u_{0|t}, \dots, u_{k-1|t}(\cdot)\}$. Also, $\bar{u}_{k|t}$ denotes the corresponding nominal input and $w_{k|t}$ denotes the uncertainty. $c(\bar{x}_{k|t},\bar{u}_{k|t}), V(\bar{x}_{N+1|t}), \mathcal{X}_{F}$ denote the nominal cost, the terminal cost and the terminal set, respectively. 
There are two main challenges to solve \eqref{eq:mpc_formulation}, namely:
\begin{enumerate}[(I)]
    \item Optimizing over policies $\{u_{0},\dots,u_{T-1}(\cdot)\}$ is an infinite dimensional problem, and computationally intractable, in general.
    \item The chance constraints need to be reformulated so that they can be solved with a numerical algorithm. 
\end{enumerate}
To address (I), we restrict our control policy to an affine disturbance feedback parametrization, i.e., 
\begin{align}\label{eq:feedback_policy}
    u_{k|t}  = \sum_{i=0}^{k-1} M_{k,i|t}w_{i|t}+v_{k|t},
\end{align}
where $M_{k,i|t}$ is the planned feedback gain for the prediction step $k$, corresponding to $w_{i|t}$, and $v_{k|t}$ is a nominal input.
To address (II) we use the scenario approach~\cite{calafiore2006scenario} to reformulate the chance constraints \eqref{eq:chance_constr} as discussed next.  

\subsection{Scenario Approach}
The scenario-based approach transforms chance constraints into a finite set of sampled constraints. While this method provides significant advantages over approaches tailored to specific probability distributions, it often requires a large number of samples, making it unsuitable for real-time MPC applications. The increased number of constraints directly impacts the computational burden of solving the optimization problem. When coupled with the affine disturbance feedback policy \eqref{eq:feedback_policy}, the optimization problem, which involves $\{\{M_{k,i|t}\}_{i=0}^{k-1}, v_{k|t}\}_{k=0}^{N}
$, becomes particularly computationally demanding due to the high-dimensional decision space and the complexity of the constraint structure.
To address this issue, we propose an offline sampling method which:
1) Extracts feature decision variables impacting state evolution in constraints to create a feature feedback policy that satisfies the chance constraints.
2) Computes an approximate chance-constrained set using the probabilistic scaling approach with offline scenarios proposed in \cite{mammarella2022chance}.

This study extends the concept that chance constraints can be probabilistically satisfied via the scenario approach based on offline samples \cite{lee2023stochastic,lorenzen2017stochastic}. Confidence in satisfaction increases with more scenarios; see \cite{lorenzen2017stochastic, calafiore2006scenario} for details on scenario requirements.

\section{Offline Sampling approach for Extracting feature disturbance feedback gains}\label{sec:feature_truncation}

Given the control policy in \eqref{eq:feedback_policy}, the decision variables in the MPC problem of \eqref{eq:mpc_formulation} are the feedback gain matrices $M_{k,i|t}$ and the nominal input $v_{k|t}$ for $k=0,\dots,N$ and $i=0,\dots,k-1$ for each $k$. 
In the SMPC formulation $\{\{M_{k,i|t}\}_{i=0}^{k-1}\}_{k=0}^{N}$ does not affect the objective function and is solely utilized for constraint satisfaction. By evaluating how $\{\{M_{k,i|t}\}_{i=0}^{k-1}\}_{k=0}^{N}$ influences the constraints across offline samples and the system dynamics, we construct a large constraint matrix composed of these samples and the corresponding state dynamics. We apply Singular Value Decomposition (SVD) to the matrix, as in classical feature extraction techniques. By truncating components with zero singular values, we extract the key feature components of the disturbance feedback policy, without altering its impact on the constraint matrix.
In the following sections, we provide a detailed discussion on the feature extraction process. 

\subsection{Constraint matrix from dynamics and offline samples}
For notational brevity, $\mathbf{y}_t$ denotes a vectorization of all gain matrices $\{M_{k,0|t}, \dots, M_{k,k-1|t}\}_{k=1}^{N}$ at time step $t$ and $\mathbf{v}_t$ denotes a vector stacking up all nominal inputs, i.e., $[v_{0|t}^\top, \dots, v_{N|t}^\top]^\top$. Then, the scenario approximation of \eqref{eq:mpc_formulation} with $N_s$ samples is rewritten as:
\begin{subequations}\label{eq:mpc_simple_ver}
\begin{align}
    \min_{\mathbf{y}_{t},\mathbf{v}_{t}} & ~~ \frac{1}{2} \begin{bmatrix}
        x_{0|t} \\
        \mathbf{v}_t
    \end{bmatrix} ^\top \mathbf{H} \begin{bmatrix}
        x_{0|t} \\
        \mathbf{v}_t
    \end{bmatrix}  \\ 
&\mathrm{s.t.} \nonumber \\
&\bar{A} \mathbf{y}_t + \bar{B} \mathbf{v}_t + \bar{C} x_{0|t} +\bar{d} \leq 0 , \label{eq:sampled_stacked_constraints}\\
&x_{0|t} = x_t,
\end{align}
\end{subequations}
where $\mathbf{H}$ denotes the quadratic cost weight matrix and $\bar{A}, \bar{B}, \bar{C}, \bar{d}$ denote matrices stacking dynamics and constraints from \eqref{eq:mpc_formulation}, described as next. 
$\bar{A}, \bar{d}$ contain the offline uncertainty samples. 
The matrices $\bar{B}$ and $\bar{C}$ are constructed by stacking $N_s$ instances of $B_{0:N}, C_{0:N}$, respectively. All stacked matrices are constructed as follows:
\begin{align}
    \bar{A} = \begin{bmatrix}
        A_{0:N}^0 \\ \vdots \\ A_{0:N}^{N_s}
    \end{bmatrix}, 
    \bar{B} = \begin{bmatrix}
        B_{0:N} \\ \vdots \\ B_{0:N}
    \end{bmatrix},
    \bar{C} = \begin{bmatrix}
        C_{0:N} \\ \vdots \\ C_{0:N}
    \end{bmatrix},
    \bar{d} = \begin{bmatrix}
        d^0_{0:N} \\ \vdots \\ d^{N_s}_{0:N}
    \end{bmatrix},
\end{align}
$A_{0:N}^j, B_{0:N}, C_{0:N}, d_{0:N}^j$ are constructed as follows:
\begin{align}
    A^j_{0:N} = \begin{bmatrix}
        w_{0}^{j \top} \cdots w_{N}^{j \top}
    \end{bmatrix} \otimes \begin{bmatrix} 
    0 & \dots & 0
    \\ CB  && \vdots
    \\ \vdots & \ddots & 0
    \\    CA^{N-1}B & \cdots & CB
    \\ H_u& \cdots & 0
    \\ \vdots & \ddots & \vdots
    \\ 0& \cdots & H_u
    \end{bmatrix},  \label{eq:A_kron}
    B_{0:N} = \begin{bmatrix}
    CB & \dots & 0
    \\ \vdots & \ddots& \vdots
    \\    CA^{N}B & \cdots & CB
    \\ H_u& \cdots & 0
    \\ \vdots & \ddots & \vdots
    \\ 0& \cdots & H_u
    \end{bmatrix},
\end{align}
\begin{align}
    C_{0:N} = \begin{bmatrix}
        CA \\ \vdots \\ CA^{N+1}
        \\ 0
        \\  \vdots 
        \\ 0
        \end{bmatrix},
    d^j_{0:N} =  \begin{bmatrix}
    C & \dots & 0
    \\ \vdots & \ddots & \vdots
    \\    CA^{N} & \cdots & C
    \\ & 0 &
    \\ & \vdots &
    \\ & 0 &
    \end{bmatrix} \begin{bmatrix}
        w_0^j \\ \vdots \\w_{N}^j
    \end{bmatrix} - \begin{bmatrix}
        d \\ \vdots \\ d 
        \\ -h_u
        \\ \vdots
        \\ -h_u
    \end{bmatrix},
\end{align}
where $j$ denotes the index of the $j$-th samples and $\otimes$ denotes a Kronecker product.
In this Quadratic Programming (QP) problem \eqref{eq:mpc_simple_ver}, $\mathbf{y}_t$ only appears in \eqref{eq:sampled_stacked_constraints} constraints, not in the objective function. We want to extract features of $\mathbf{y}_t$ by  discarding redundant elements that have no effect on \eqref{eq:sampled_stacked_constraints} and solve the problem for only the feature components without changing the optimization results. We apply SVD based feature extraction techniques in the following section. 

\begin{remark}
The terminal constraint can be expressed in terms of \( \mathbf{v}_t \) and easily integrated into the framework without any changes to the procedure if a robust invariant set is used, as it is independent of \( \mathbf{y}_t \). Detailing such design is beyond this paper's scope.
\end{remark}
\vspace{-10pt}


\subsection{SVD based Feature Extraction}
To efficiently apply SVD to $\bar{A}$ which is a large matrix, we first perform SVD on each of its sub-components. From \eqref{eq:A_kron}, we aim to apply SVD to each matrix prior to its formulation as a Kronecker product. Here, $\mathbf{W} \in \mathbb{R}^{N_s \times n(N+1)}$ denotes the vertically stacked form of $[w^{j \top}_0 \cdots w^{j \top}_{N}]$ for all $N_s$ samples and $ U_{W}, \Sigma_{W}, V_{W}$ denote the resulting matrices of SVD of $\mathbf{W}$. Next, we apply SVD to the matrix of \eqref{eq:A_kron} as: 
\begin{align}
    \mathbf{W} = \begin{bmatrix}
        w^{0 \top}_0 \cdots w^{0 \top}_{N}\\
        \vdots \\
        w^{N_s \top}_0 \cdots w^{N_s \top}_{N}
    \end{bmatrix} = U_{W} \Sigma_{W} V_{W}^\top,
        \begin{bmatrix}
    0 & \cdots & 0
    \\ CB && \vdots
    \\ \vdots & \ddots & 0
    \\    CA^{N-1}B & \cdots & CB
    \\ H_u& \cdots & 0
    \\ \vdots & \ddots & \vdots
    \\ 0& \cdots & H_u
    \end{bmatrix} = U_{B} \Sigma_{B} V_{B}^\top
\end{align}
Then we can present $\bar{A}$ with SVD submatrices as below:
\begin{align}
    \bar{A} = \begin{bmatrix}
        U_{B}&&\\
        &\ddots&\\
        &&U_{B}
    \end{bmatrix} (\mathbf{W} \otimes \Sigma_{B}) 
    \begin{bmatrix}
             V^\top_{B}&&\\
        &\ddots&\\
        &&V^\top_{B}
    \end{bmatrix}    
\end{align}
In this formulation, the first matrix is constructed by stacking \( N_s \) instances of $U_{B}$ diagonally, and the third matrix is formed by stacking \( nN \) instances of \( V^\top_{B} \) diagonally. 
\begin{align}
    \mathbf{W} \otimes \Sigma_{B} =  U_{W} \Sigma_{W} V_{W}^\top \otimes \Sigma_{B} = (U_{W} \otimes I_{p}) \Sigma_{W} \otimes \Sigma_{B} (V^\top_{W} \otimes I_{q}),
\end{align}
where $p = N_s, q= nN$ and the equality holds according to $(A\otimes B) (C \otimes D) = (AC) \otimes (BD)$ from the mixed-product property of Kronecker product. Then, $\bar{A}$ can be expressed as: 
\begin{align}\label{eq:svd_A}
    \bar{A} \!=\! (I_{N_s}\! \otimes \!U_{B}) (U_{W} \!\otimes \!I_{p}) \Sigma_{W}\!\! \otimes\!\! \Sigma_{B} (V^\top_{W} \!\otimes\! I_{q}) (I_{nN} \!\otimes \!V^\top_{B}) 
\!=\! (U_{W} \!\!\otimes\!\! U_{B}) (\Sigma_{W} \!\!\otimes \!\Sigma_{B}) (V^\top_{W} \!\!\otimes\! V^\top_{B})
\end{align}
This decomposition leverages the structure of the SVD to simplify SVD of the large matrix \( \bar{A} \).

\subsection{Truncated feedback policy}\label{sec:trun_y}
Consider the term  \( \bar{A} \mathbf{y}_t \) in \eqref{eq:sampled_stacked_constraints} using the SVD form of $\bar{A}$ in \eqref{eq:svd_A}.
Geometrically  \( V^\top_{W} \otimes V^\top_{B} \)  acts as a rotation matrix.
Since \( \mathbf{y}_t \) represents a vector of decision variables that can take any values, \( V^\top_{W} \otimes V^\top_{B} \mathbf{y}_t \) also spans the same dimensional space. Given that \( \Sigma_{W} \otimes \Sigma_{B} \) is a diagonal matrix, certain elements of \( V^\top_{W} \otimes V^\top_{B} \mathbf{y}_t \) are zeroed out after matrix multiplication. Consequently, we can truncate the elements of \( V^\top_{W} \otimes V^\top_{B} \mathbf{y}_t \) corresponding to the zero singular values of \( \Sigma_{W} \otimes \Sigma_{B} \) (i.e., \( \sigma_i = 0 \)). 
\begin{remark}
   In theory, only the components with exact zero singular values are truncated. In practice, components with small singular values can be removed. This will affect chance constraints guarantees and new probability satisfaction bounds can be derived as a function of the truncation parameters. Derivations are beyond the scope of this paper.
\end{remark}
The remaining valid components of the policy \( \mathbf{y}_t \) after truncation, are expressed as follows:
\begin{align}\label{eq:obtain_truncated_policy}
    \mathbf{y}^{\mathrm{trun}}_t = \mathrm{trun}((\Sigma_{W} \otimes \Sigma_{B}) (V^\top_W \otimes V_B^\top) \mathbf{y}_t)
\end{align} 
where \( \mathrm{trun} \) is a function that removes the corresponding zero elements, retaining only the non-zero elements. In other words, if the \( N_{y} \)-dimensional vector $(\Sigma_{W} \otimes \Sigma_{B}) (V^\top_W \otimes V_B^\top) \mathbf{y}_t$ contains \( N_{\mathrm{zero}} \) zero elements, this operation can be represented as a multiplication by an \( (N_{y} - N_{\mathrm{zero}}) \times N_{y} \) matrix with ones on the diagonal. Formally, the entire process can be expressed as $ \mathbf{y}^{\mathrm{trun}}_t = P^{} \mathbf{y}_t $ where $P$ denotes the product of all matrices involved.  Using this truncated policy and rotation/reflection matrices truncated accordingly, we can rewrite \eqref{eq:sampled_stacked_constraints} as:
\begin{align}\label{eq:truncated_sampled_constr}
    (U_{W} \otimes U_{B})^{\mathrm{trun}}\mathbf{y}^{\mathrm{trun}}_t + \bar{B} \mathbf{v}_t + \bar{C} x_{0|t} +\bar{d} \leq 0, 
\end{align}
where $(U_{W} \otimes U_{B})^{\mathrm{trun}}$ is a truncated version of $(U_{W} \otimes U_{B})$.
The truncated policy significantly reduces the number of decision variables without altering the sampled constraints. We refer to this reduced policy as the ``feature feedback policy".
Although the number of decision variables decreases, the sampled constraints (i.e., The number of rows of 
$\bar{A}$ ) remains unchanged and is large. This will be address in the next section.


\vspace{-0.0em}
\begin{remark}
The matrix $P$  retains a full-rank structure, as both the truncation matrix and the Kronecker product of diagonal and unitary matrices are full rank. Consequently, once \(\mathbf{y}_t^{\mathrm{trun}}\) is computed through the optimization process, the corresponding feasible \(\mathbf{y}_t\) can be reconstructed. $\mathbf{y}^{\mathrm{recon}}_t$ denotes the reconstructed $\mathbf{y}_t$. This ensures the applicability of the feature feedback policy. 

When vectorizing all the feedback gain matrices, certain elements of $\mathbf{y}_t$ must be zeros due to time causality. For instance, at prediction step $k$, the elements corresponding to $M_{k,k+1|t}$ should be zero when we construct the \eqref{eq:obtain_truncated_policy}. Without enforcing these zero constraints, some reconstructed $\mathbf{y}_t$ may include non-zero values in these positions, violating time causality and resulting in infeasible control policies. To address this issue,  we enforce \( \mathbf{y}_{t,i} = 0 \) for all indices \( i \in \mathcal{I} \), where \( \mathcal{I} \) represents the set of indices that must satisfy these zero conditions. Consequently, the relationship \( \mathbf{y}_t^{\mathrm{trun}} = P \mathbf{y}_t \) can be reformulated as: $\mathbf{y}_t^{\mathrm{trun}} = \bar{P} \mathbf{\bar{y}}_t$
where \( \mathbf{\bar{y}}_t \) comprises only the elements \( \mathbf{y}_{t,i} \) for \( i \notin \mathcal{I} \).
If \(\bar{P}\) is a full-rank matrix, then for any \(\mathbf{y}^{\mathrm{trun}}_t\), a corresponding \(\mathbf{\bar{y}}_t\) can be reconstructed such that the original policy \(\mathbf{y}_t\) satisfies \(\mathbf{y}^{\mathrm{trun}}_t = P \mathbf{y}_t\). In this case, no additional constraints are necessary in the optimization problem for \(\mathbf{y}^{\mathrm{trun}}_t\). However, if \(\bar{P}\) is not full rank, it becomes essential to impose further restrictions on \(\mathbf{y}^{\mathrm{trun}}_t\) by adding equality constraints and introducing new decision variables as follows:
\begin{align}
    \mathbf{y}^{\mathrm{trun}}_t = \bar{P}_{\mathrm{basis}} c,
\end{align}
where \(\bar{P}_{\mathrm{basis}}\) consists of the basis columns of \(\bar{P}\), and \(c\) is a vector of decision variables. Since the truncated policy \(\mathbf{y}^{\mathrm{trun}}_t\) already has a significantly reduced dimensionality, incorporating a few additional equality constraints and decision variables has minimal impact on computational efficiency, even in the worst-case scenario.

\end{remark}



\subsection{Probabilistic Properties}\label{subsec:prob_after_trun}
In \cite{lorenzen2017stochastic}, the main theorem proves that if the number of scenarios $N_s$ satisfies $N_s \geq \frac{5}{\epsilon}(\ln\frac{4}{\delta}+d \ln\frac{40}{\epsilon})$ where $d$ denotes the dimension of the set element, the feasible constrained set of scenario approximate problem belongs to the feasible set of the original problem allowing violations with at most probability $\epsilon$ and confidence level at least $1-\delta$. Consequently,
the solution $\mathbf{y}_t, \mathbf{v}_t$ of \eqref{eq:mpc_simple_ver} satisfies the original problem \eqref{eq:mpc_formulation} with confidence level $1-\delta$, provided that the number of samples $N_s$ meets the condition $N_s \geq \frac{5}{\epsilon}(\ln\frac{4}{\delta}+d \ln\frac{40}{\epsilon})$. If the control policy is truncated by removing only components corresponding to exact zero singular values ($\sigma_i = 0$), the truncated variables $\mathbf{y}^{\mathrm{trun}}_t$ and $\mathbf{v}_t$ would still satisfy all the existing sampled constraints. This ensures that the original constraints are satisfied with the same confidence level $1-\delta$ when $N_s \geq \frac{5}{\epsilon}(\ln\frac{4}{\delta}+d \ln\frac{40}{\epsilon})$ is used, even when the truncated policy is used.


\section{Probabilistic scaling based set approximation}\label{sec:scaling_set}
Here $z_t$ denotes $[\mathbf{y}_{t}^\top, \mathbf{v}_t^\top , x_{0|t}^\top]^\top$ at time step $t$ and $z^{\mathrm{trun}}_t$ denotes $[\mathbf{y}^{\mathrm{trun} \top}_t, \mathbf{v}_t^\top , x_{0|t}^\top]^\top$ at time step $t$. 
We define the $\epsilon$-chance constrained set of $z_t$ next.
\begin{definition}[$\epsilon$-Chance Constrained Set of $z_t$] 
    \begin{align}
        \mathcal{Z}_{\epsilon} = \{z_t ~|~ \eqref{eq:chance_constr}, ~\text{holds},  ~\forall k \in \{0,\dots,N\}\}
    \end{align}       
\end{definition}
Also we define the sampled set for each sampled disturbance sequence $w=[w_{0|t}, \dots, w_{N|t}]$ as:
\begin{definition}[Sampled Set of truncated $z_t^{\mathrm{trun}}$] 
    \begin{align}
        \mathcal{Z}^{\mathrm{trun}}(w) = \{z_t^{\mathrm{trun}} | Cx_{k|t}(w) \leq d, H_u u_{k|t}(w) \leq h_u,  ~\forall k \in \{1,\dots,N\} \},
    \end{align}       
\end{definition}
where $x_{k|t}(w), u_{k|t}(w)$ denote the states and the inputs evaluated over $w$ and the reconstructed $z^{\mathrm{recon}}_t$ from $z_t^{\mathrm{trun}}$. 
Using the definitions, $\mathcal{Z}^{\mathrm{trun}}_{\bar{N}}$ for any integer $\bar{N}$ denotes $\bigcap_{j=1}^{\bar{N}} \mathcal{Z}^{\mathrm{trun}}(w^j)$, which is an intersection of multiple sampled sets. 
A general scenario approximation approach constructs an inner approximate set of $\mathcal{Z}_{\epsilon}$ with the $1-\delta$ confidence level, using finite sampled constraints. Through the proposed truncation process in Sec. \ref{sec:feature_truncation}, we compute the sampled set $\mathcal{Z}^{\mathrm{trun}}_{N_s}$ in $z^{\mathrm{trun}}_t$ space, ensuring any reconstructed $z_t$ from $z^{\mathrm{trun}}_t$ in $\mathcal{Z}^{\mathrm{trun}}_{N_s}$ remains feasible within the approximate sampled set \eqref{eq:sampled_stacked_constraints}. $z^{\mathrm{recon}}_t$ denotes the reconstructed $z_t$. 
However, as elaborated in Sec. \ref{sec:trun_y}, the sampled constraints still result in a large number of constraints.
To address this, we aim to construct an approximate set that simplifies the representation of the sampled constraints, while maintaining the same $1-\delta$ confidence level of the scenario approximation. We achieve this by using the probabilistic scaling approach \cite{mammarella2022chance}. To explain how the probabilistic scaling approach keeps the $1-\delta$ confidence with the truncated feature policy, we briefly introduce the probabilistic scaling approach first.    




\subsection{Probabilistic Scaling Approach}\label{subsec:scaling_approach}



The approach finds an approximate set of $\{\theta \in \mathbb{R}^{q} ~|~ \mathbb{P} (F(w)\theta \leq g(w)) \geq 1-\epsilon \}$ denoted as $\Theta_{\epsilon}$ where $F(w), g(w)$ are a matrix and a vector which are functions of uncertainty $w$, respectively. The approximated set is expressed as the Minkowski sum of a center and a $p$-norm ball. 
The main idea of the method involves two primary steps:
1) Design the center and the shape of the candidate set, defined by a $p$-norm ball and an affine transformation $H_{\theta} \in \mathbb{R}^{q \times q}$
2) Find a scaling factor $\gamma$ to ensure the set satisfies chance constraints with \( 1-\delta \) confidence, expressed as $\Theta(\gamma) = \theta_{\mathrm{c}} \oplus \gamma H_{\theta} \mathbb{B}_{p}^{q}$ where $\mathbb{B}_{p} = \{s \in \mathbb{R}^q ~|~ \|s\|_p \leq 1\}$.

Similarly, we aim to find an approximate set for the $\epsilon$-CSS of $z^{\mathrm{trun}}_t$ while preserving \( 1-\delta \) confidence level. In the following section, we provide detailed explanations of the two-step procedure, along with the necessary adjustments to the proposed approach, and demonstrate that the confidence level is maintained.

\subsection{Adjusted Probabilistic Scaling Approach}
First we decide the center and the shape matrix for $z_t^{\mathrm{trun}}$ as design parameters, following the existing methodology in \cite{mammarella2022chance}. Since this set serves as an initial candidate, it is not required to satisfy the $1-\delta$ confidence level. Therefore, the center $z_c$ and the shape $H$ can be computed from the small number of samples denoted as $N_{\mathrm{ini}}$. The center can be computed as Chebyshev center of the sampled set $\mathcal{Z}^{\mathrm{trun}}_{N_{\mathrm{ini}}}$ and the shape is determined in a form of the zonotope represented with the infinity norm. We compute the $H$ matrix ensuing that the $z_c \oplus H \mathbb{B}_{\infty}^{n_{\mathrm{trun}}}$ is the largest set included within $\mathcal{Z}^\mathrm{trun}_{N_{\mathrm{ini}}}$ where $n_{\mathrm{trun}}$ is the dimension of $z^{\mathrm{trun}}_t$. The problem to obtain $H$ is written as follows:
\begin{align}\label{eq:find_design_parameters}
    \max_{z_c, H} & ~\mathrm{vol}_p(H) ~~
\mathrm{s.t.}~ z_c + H\mathbb{B}_\infty^{n_{\mathrm{trun}}} \subseteq \mathcal{Z}^{\mathrm{trun}}_{N_{\mathrm{ini}}}
\end{align}
Once the initial design parameters $z_c, H$ are computed, we compute the $\gamma(w)$ for $N_\gamma$ samples as:
\begin{align}\label{eq:find_gamma}
    \gamma(w) = \max_{ S(\gamma)\subseteq \mathcal{Z}^{\mathrm{trun}}(w)} \gamma,
\end{align}
where $w$ is a realized disturbance sequence $[w_{0|t}, \dots, w_{N|t}]$,  $S(\gamma) = z_c \oplus \gamma H \mathbb{B}_\infty^{n_{\mathrm{trun}}}$.
We compute the maximal scaling that $S(\gamma)$ is a subset of $Z^{\mathrm{trun}}(w)$ for each realization $w$ in the truncated $z$. For the specific $\epsilon$ violation probability and $1-\delta$ confidence level, we need at least $N_{\gamma}$ samples as:
\begin{align}
    N_{\gamma} \geq \frac{7.47}{\epsilon} \ln \frac{1}{\delta}
\end{align} 
Then, for all $N_{\gamma}$ samples, we sort out the $\gamma$ in descending order. We have to choose the $r$-th smallest value $\gamma^\star$ to guarantee the $1-\delta$ confidence level according to the probabilistic scaling approach, where $r= [\frac{N_{\gamma} \epsilon}{2}]$. Then, as mentioned in Sec. \ref{subsec:scaling_approach}, $S(\gamma^\star)$ is an approximate set of the $\epsilon$-CSS of $z^{\mathrm{trun}}_t$. 
To check the proof that the $S(\gamma^\star)$ is 
 a subset of $\epsilon$-CSS of $z^{\mathrm{trun}}$ with the specific confidence, when using the lower bound of the number of samples, see \cite{mammarella2022chance}. In this paper, we affirm that applying the probabilistic scaling approach to truncated variables still maintains the original chance constraint satisfaction with the same specific confidence level. 

\begin{proposition}[Probabilistic Scaling with Truncation]
Assume that the same sample set from the truncation process of Sec. \ref{sec:feature_truncation} is employed for computing the scaling factor for the truncated variables. Then, all \( z^{\mathrm{recon}}_t \) reconstructed from \( z^{\mathrm{trun}}_t \in S(\gamma^\star) \), is contained within \( \mathcal{Z}_\epsilon \) with the same \( 1-\delta \) confidence level. Furthermore, the minimum number of samples required for the adjusted probabilistic scaling remains unchanged.
\end{proposition}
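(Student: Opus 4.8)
The plan is to run the entire argument in the truncated coordinates $z^{\mathrm{trun}}_t$, invoke the probabilistic scaling theorem of \cite{mammarella2022chance} there without modification, and transfer the resulting inclusion back through the reconstruction map. Introduce the $\epsilon$-chance constrained set of the truncated variable,
\begin{align}
    \mathcal{Z}^{\mathrm{trun}}_{\epsilon} = \{ z^{\mathrm{trun}}_t ~|~ \mathbb{P}_{w}( z^{\mathrm{trun}}_t \in \mathcal{Z}^{\mathrm{trun}}(w) ) \geq 1-\epsilon \} .
\end{align}
The first observation is that the transfer back is definitional: $\mathcal{Z}^{\mathrm{trun}}(w)$ collects exactly the $z^{\mathrm{trun}}_t$ whose reconstruction $z^{\mathrm{recon}}_t$ produces a trajectory with $Cx_{k|t}(w)\leq d$ and $H_u u_{k|t}(w)\leq h_u$ for all $k$, so $z^{\mathrm{trun}}_t\in\mathcal{Z}^{\mathrm{trun}}_{\epsilon}$ holds if and only if the reconstructed $z^{\mathrm{recon}}_t$ satisfies \eqref{eq:chance_constr} with probability at least $1-\epsilon$, i.e.\ if and only if $z^{\mathrm{recon}}_t\in\mathcal{Z}_{\epsilon}$. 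Hence the proposition reduces to the single claim that, with confidence at least $1-\delta$, $S(\gamma^\star)\subseteq\mathcal{Z}^{\mathrm{trun}}_{\epsilon}$.

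For that claim I would instantiate the probabilistic scaling theorem of \cite{mammarella2022chance} with parameter $\theta\leftarrow z^{\mathrm{trun}}_t$, with the half-space description of $\mathcal{Z}^{\mathrm{trun}}(w)$ as the $w$-parametrized affine constraint $F(w)\theta\leq g(w)$ (affine in $z^{\mathrm{trun}}_t$ because both the reconstruction $z^{\mathrm{trun}}_t\mapsto z^{\mathrm{recon}}_t$ and the map from $z^{\mathrm{recon}}_t$ to the predicted outputs are linear), and with the center and shape $(z_c,H)$ from \eqref{eq:find_design_parameters}. The cited theorem then gives exactly $S(\gamma^\star)=z_c\oplus\gamma^\star H\mathbb{B}_\infty^{n_{\mathrm{trun}}}\subseteq\mathcal{Z}^{\mathrm{trun}}_{\epsilon}$ with confidence at least $1-\delta$, provided $\gamma^\star$ is the $r$-th smallest of the $\gamma(w^j)$ with $r=\lceil N_\gamma\epsilon/2\rceil$ and $N_\gamma\geq\frac{7.47}{\epsilon}\ln\frac{1}{\delta}$. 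Its hypotheses (the $N_\gamma$ samples are i.i.d.\ and independent of the constraint family and of $(z_c,H)$) involve no dependence on $\dim\theta$, which is precisely why the sample count $N_\gamma$ and the index $r$ are left unchanged by the truncation --- the final assertion of the proposition. Combining this with the definitional equivalence above, every reconstruction of a point of $S(\gamma^\star)$ lies in $\mathcal{Z}_{\epsilon}$, with confidence at least $1-\delta$.

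The step I expect to be the main obstacle is checking the independence hypothesis under the proposition's assumption that the \emph{same} offline samples used for the truncation in Sec.~\ref{sec:feature_truncation} are reused to compute the scaling factors: the map $\mathcal{Z}^{\mathrm{trun}}(\cdot)$ is defined through the reconstruction, hence through the matrix $P$ of Sec.~\ref{sec:trun_y}, which is itself built from those samples. I would resolve this by showing the truncation is \emph{exactly} lossless with a \emph{structural} kernel: when only components with $\sigma_i=0$ are removed (and, generically in the samples, $\mathbf{W}$ has full column rank so the only zeros come from $\Sigma_B$), $\ker P=\mathbb{R}^{n(N+1)}\otimes\ker G$ where $G=U_B\Sigma_B V_B^\top$ is the sample-independent block sitting inside the Kronecker product of \eqref{eq:A_kron}, and since the realization operator of any disturbance sequence has the form $w^\top\otimes G$, this kernel lies in every such operator's null space. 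Consequently $\mathcal{Z}^{\mathrm{trun}}(w)$, the candidate $S(\gamma)$, and $\mathcal{Z}^{\mathrm{trun}}_{\epsilon}$ all depend on the reused samples only through one fixed orthogonal change of coordinates under which the inclusion $S(\gamma^\star)\subseteq\mathcal{Z}^{\mathrm{trun}}_{\epsilon}$ is invariant, so the theorem of \cite{mammarella2022chance} applies with its hypotheses intact. This is exactly where the ``exact zeros only'' proviso of Sec.~\ref{subsec:prob_after_trun} is essential: discarding small nonzero singular values would make $\ker P$ data-dependent, break losslessness on unsampled $w$, and instead require the perturbed-confidence analysis deferred in the earlier remarks.
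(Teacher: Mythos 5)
Your proposal is correct and follows the same route as the paper's proof: work entirely in the truncated coordinates, invoke the probabilistic scaling guarantee of \cite{mammarella2022chance} there (noting that the sample bound $N_\gamma \geq \frac{7.47}{\epsilon}\ln\frac{1}{\delta}$ and the index $r$ are dimension-independent, hence unchanged), and transfer the inclusion back through the linear reconstruction $z^{\mathrm{recon}}_t = P_{\mathrm{re}} z^{\mathrm{trun}}_t$. The one place you go beyond the paper is the last paragraph: the paper's proof simply asserts that the constraints evaluated at $z^{\mathrm{trun}}_t$ and at $z^{\mathrm{recon}}_t$ coincide for every sampled $w$ and lets the confidence claim follow, without addressing whether reusing the truncation samples to compute the scaling factors disturbs the i.i.d./independence hypotheses of the scenario result. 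Your structural-kernel lemma --- that with exact-zero truncation and generically full-column-rank $\mathbf{W}$, $\ker P$ comes only from the sample-independent block $G$ inside the Kronecker product, so the truncation is lossless for \emph{every} disturbance sequence and not merely the sampled ones --- is exactly the fact needed to make the paper's ``identical constraints'' assertion valid off-sample and to keep the cited theorem's hypotheses intact. This is a genuine strengthening of the written proof rather than a different argument, and it also explains precisely why the ``exact zeros only'' proviso of Sec.~\ref{subsec:prob_after_trun} matters.
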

\begin{proof}
Let \(\mathcal{W}\) denote the sample set. The constraints evaluated at \(z^{\mathrm{trun}}_t\) for all samples in \(\mathcal{W}\) are identical to those evaluated at the corresponding \(z^{\mathrm{recon}}_t\), where \(P_{\mathrm{re}}\) is the matrix that reconstructs \(z^{\mathrm{recon}}_t\) from \(z^{\mathrm{trun}}_t\) (i.e., \(z^{\mathrm{recon}}_t = P_{\mathrm{re}} z^{\mathrm{trun}}_t\)). Let \(x_{k|t}(w,z)\) and \(u_{k|t}(w,z)\) denote the state and input at prediction step \(k\), respectively, evaluated with disturbance \(w\) and variable \(z\).
For every \(w \in \mathcal{W}\), the state constraints \(Cx_{k|t}(w, z^{\mathrm{recon}}_t) \leq d\) and input constraints \(H_u u_{k|t}(w, z^{\mathrm{recon}}_t) \leq h_u\) are satisfied. Consequently, for all \(z^{\mathrm{trun}}_t\) within the set \(z_c \oplus \gamma H \mathbb{B}_\infty^{n_{\mathrm{trun}}}\), the reconstructed \(z^{\mathrm{recon}}_t = P_{\mathrm{re}} z^{\mathrm{trun}}_t\) lies within the feasible set \(\{ z \mid Cx_{k|t}(w,z) \leq d, H_u u_{k|t}(w,z) \leq h_u \}\) for all \(w \in \mathcal{W}\).
As a result, each \(z^{\mathrm{recon}}_t\) is contained within \(\mathcal{Z}(w)\) for at least \(N_\gamma - \left\lfloor \frac{N_\gamma \epsilon}{2} \right\rfloor\) samples \(w \in \mathcal{W}\) according to \eqref{eq:find_gamma}. Therefore, \(z^{\mathrm{recon}}_t\) belongs to the approximate set of (\(\epsilon\)-CSS) of \(z_t\) using the probabilistic scaling approach.
Since all inclusion relations hold in the \(z^{\mathrm{trun}}_t\) space, the required number of samples in the adjusted probabilistic scaling approach to guarantee the confidence of chance constraint satisfaction remains unchanged from the original method. Consequently, each reconstructed \(z^{\mathrm{recon}}_t\) derived from elements of \(S(\gamma)\) resides within the \(\epsilon\)-CSS of \(z_t\) with \(1 - \delta\) confidence.

\end{proof}

\vspace{-2.0em}

\section{Online MPC with principal feature policy and approximate sets}\label{sec:online_search}

\subsection{Construct the online MPC problem with offline components}
We construct the modified MPC problem with the offline computed components described in the previous sections. The decision variables of the online MPC problem are the truncated policy gains $\mathbf{y}^{\mathrm{trun}}_t$ and the nominal inputs $\mathbf{v}_t$ at time step $t$. The constraints are the approximate set  $S(\gamma^\star)$ obtained from the adjusted probabilistic scaling approach. In summary, the modified MPC problem is constructed as:
\begin{align}\label{eq:mpc_online}
    \min_{\mathbf{y}^{\mathrm{trun}}_{t},\mathbf{v}_{t}} & ~~ \frac{1}{2} \begin{bmatrix}
        x_{0|t} \\
        \mathbf{v}_t
    \end{bmatrix} ^\top \mathbf{H} \begin{bmatrix}
        x_{0|t} \\
        \mathbf{v}_t
    \end{bmatrix}  
~\mathrm{s.t.}  ~~[\mathbf{y}^{\mathrm{trun}}_t, \mathbf{v}_t, x_{0|t} ] \in S(\gamma^\star), ~~x_{0|t} = x_t,
\end{align}

\subsection{Implementation details}
Infinity norm can be converted into linear inequalities in \eqref{eq:infinity_norm_implement}.
Since taking the inverse of $H$ is computed offline, it does not degrade the efficiency of online implementation and numerical stability. 
 Since $x_{0|t}$ is included in $z_t$, we have to update the last elements of $z_t$ with $x_t$ every time step.
\begin{align}\label{eq:infinity_norm_implement}
    z \in z_c \oplus \gamma H \mathbb{B}_{\infty} 
    \iff \frac{z - z_c}{\gamma} = Hs,~ \| s\|_{\infty} \leq 1  
    \iff   -\gamma \leq H^{-1}(z - z_c) \leq \gamma  
\end{align}
Following the typical MPC framework, the first element of $\mathbf{v}^*$,  is selected as the control input for the current time step.
The entire process of the proposed algorithm is described in Algorithm. \ref{alg:1}.   

\setlength{\textfloatsep}{0pt}
 \begin{algorithm}[ht]
 \caption{Fast stochastic MPC with the feature feedback policy using offline sampling}
 \begin{algorithmic}[1]
 \renewcommand{\algorithmicrequire}{\textbf{Input:}}
 \renewcommand{\algorithmicensure}{\textbf{Output:}}
 \REQUIRE The uncertainty samples $\{[w^j_{0},\dots,w^j_{N}]\}_{j=0}^{N_{\mathrm{s}}}$
 \ENSURE  Control policy $u_{t}(\cdot)$

 \textit{Offline} :
 
 \STATE Obtain a SVD based feature feedback policy in \eqref{eq:obtain_truncated_policy} using offline sampling.
 \STATE Compute the norm-based approximate set of chance constrained set using adjusted probabilistic scaling approach in \eqref{eq:find_design_parameters}, \eqref{eq:find_gamma}.
 
 \textit{Online at time step $t$} :

\STATE Construct the MPC problem \eqref{eq:mpc_online} with the offline computed components
 \STATE Solve \eqref{eq:mpc_online} and apply $v_{0|t}^\star$ to the system \eqref{eq:sys} and Repeat the online process at time step $t+1$.
 \end{algorithmic}\label{alg:1}
 \end{algorithm}
\vspace{-2.3em}

\section{Simulation Results}
In this section, we demonstrate the practical application of the proposed method through numerical simulations. We consider a lane keeping scenario with the safety distance and the following distance not to disturb traffic flow. In this scenario, an ego vehicle keeps the lane with a preceding vehicle which is randomly changing a speed. The objective is to keep the reference speed as possible, while ensuring a safety distance (i.e., minimum distance) and a following distance (i.e., maximum distance) up to the preceding vehicle.
The preceding vehicle has uncertainties of speed and acceleration. 

We consider an integrator model of the ego vehicle as follows:
\begin{align}
    \begin{bmatrix}   s_{t+1} \\    v_{t+1}    \end{bmatrix} =
    \begin{bmatrix}    1 & \Delta t \\ 0 & 1    \end{bmatrix}
    \begin{bmatrix}     s_{t} \\ v_{t}     \end{bmatrix} + 
    \begin{bmatrix}         0 \\ \Delta t    \end{bmatrix} a_t,    
\end{align}
where $s_t, v_t, a_t$ denote the position, speed, acceleration of the ego vehicle at time step $t$, respectively. 
The dynamic model of the preceding vehicle is written as:
\begin{align}
    \begin{bmatrix}   s_{t+1}^{\mathrm{env}} \\    v^{\mathrm{env}}_{t+1}    \end{bmatrix} =
    \begin{bmatrix}    1 & \Delta t \\ 0 & 1    \end{bmatrix}
    \begin{bmatrix}     s_{t}^{\mathrm{env}} \\ v_{t}^{\mathrm{env}}     \end{bmatrix} + 
    \Delta t  \begin{bmatrix}         w_{1,t} \\  w_{2,t}  \end{bmatrix} ,    
\end{align}
where $s^{\mathrm{env}}_t,v^{\mathrm{env}}_t, a^{\mathrm{env}}_t,  w_{1,t}, w_{2,t}$ denote the position, speed, acceleration, uncertainty in speed and acceleration of the preceding vehicle at time step $t$, respectively.
The uncertainty samples are chosen from an uniform distribution, which is not known by the controller.
So the entire system dynamics is described in the form of \eqref{eq:sys} where 

\begin{align*}
    A = \begin{bmatrix}
        1 & \Delta t & 0 & 0 \\
        0 & 1 & 0 & 0 \\
        0 & 0 & 1 & \Delta t \\
        0 & 0 & 0 & 1
    \end{bmatrix},~
    B = \begin{bmatrix}
        0 \\ \Delta t \\ 0 \\ 0
    \end{bmatrix},~
    w_{t} = \begin{bmatrix}
        0 \\ 0 \\ w_{t}^{\mathrm{s}} \\ w_{t}^{\mathrm{v}}
    \end{bmatrix}
\end{align*}
and $x_t = [s_t,v_t,s_{t}^{\mathrm{env}}, v_{t}^{\mathrm{env}}]^\top, u_t = a_t$. 
The safety constraints, the following traffic flow constraints and the acceleration constraints are
\begin{subequations}\label{eq:sim_constr}
    \begin{align}
        s_t^{\mathrm{env}}-s_t\ge d_{\mathrm{safe}}-(v_t^{\mathrm{env}}-v_t)\Delta t \\
        s_t^{\mathrm{env}}-s_t\le d_{\mathrm{keep}} \\
        a_{\mathrm{min}} \leq a_t \leq  a_{\mathrm{max}}
    \end{align}
\end{subequations}
We want to have \eqref{eq:sim_constr} constraints violated in a probability lower than $\epsilon$, as below:
\begin{align}
    \mathbb{P}( \eqref{eq:sim_constr} ~\text{is satisfied} ) \geq 1-\epsilon
\end{align}
We choose the objective function as $J = \sum_{t=0}^T(v_t-v_{\mathrm{ref,t}})^2+a_t^2$, in order to minimize the tracking error and regularize the inputs. 
We find MPC solutions to the optimal control problem \eqref{eq:mpc_formulation} with simulation parameters in Table. \ref{table:sim_setup}.

To assess the effectiveness of our proposed approach, we compare it against two other policies of scenario approach: 
a) the open-loop policy, b) the affine feedback policy fully computed online (full feedback policy, in short). We evaluate these policies based on two key criteria: 
(I)  the performance quality of solutions and (II) the computational complexity.

\begin{table}[h!]
\centering
    \begin{tabular}{|c| c|| c |c|| c| c || c|c || c|c|} 
 \hline 
   \multicolumn{10}{|c|}{Simulation Parameters  (All units are SI units)} \\  
 \hline
 $s_0$ & 6 & $v_0$ & 6 & $v_{\mathrm{ref}}$& 5   &$s_0^{\mathrm{env}}$ & 20 & $v_0^{\mathrm{env}}$ & 4.5\\  [0.1ex] 
 \hline

  $w_0^{\mathrm{s},\mathrm{env}}$ & U(-0.6,0.6) &  $w_0^{\mathrm{v},\mathrm{env}}$& U(-0.8,0.8)&$\Delta T$ & 0.2 & Max acc & 10 & min acc& -10 \\ [0.1ex] 
 \hline
  $N_{\mathrm{s}}$ & $10^4$ &$N_{\mathrm{\gamma}}$  &  2000 & $N_{\mathrm{MPC}}$& 5 & $N_{\mathrm{trial}}$ &50 & $N_{\mathrm{task}}$& 20 \\
 \hline
  $d_{\mathrm{safe}}$ & 10 &$d_{\mathrm{follow}}$  & 16 &&&&&& \\ [0.1ex]
 \hline
\end{tabular}
\caption{Simulation parameter setup}
\label{table:sim_setup}
\end{table}
\vspace{-0.5em}
\subsection{Performance of the control policy}
The performance of the proposed controller is evaluated in two aspects: 1) the area of Region Of Attraction (ROA) 2) The sum of closed-loop costs.

First, we compare the area of approximate Region Of Attraction (ROA) between the approaches after we applied the proposed MPC and the other MPC policies a), b) mentioned earlier. We try to solve the MPC problem with multiple discretized states as an initial state of the ego vehicle. This analysis highlights the regions within the state space where each policy can obtain a feasible solution to control the ego vehicle. The results are illustrated in Fig. \ref{fig:roa_all_4}. The blue circle denotes the feasible initial states, while the red cross denotes the infeasible initial states. The open-loop policy has an empty region of the feasible states, implying the scenario is too tight for finding a conservative policy satisfying the constraints. On the other hand, the proposed policy and the full feedback policy have some feasible regions. Remarkably, the proposed method boasts an ROA that closely rivals that of the full feedback gain policy. 
To evaluate the robustness of the proposed approach, we conduct a similar simulations using a less restrictive initial $s_0^{\mathrm{env}}$ which is $21$m. The results are presented in Fig. \ref{fig:roa_all_7}. While the open-loop policy exhibits a non-empty feasible region, both the proposed policy and the full feedback policy demonstrate larger feasible regions, with each being of comparable area.

\begin{figure}[h]

\centering
\subfigure[Open-loop policy]{\includegraphics[width=0.32\textwidth]{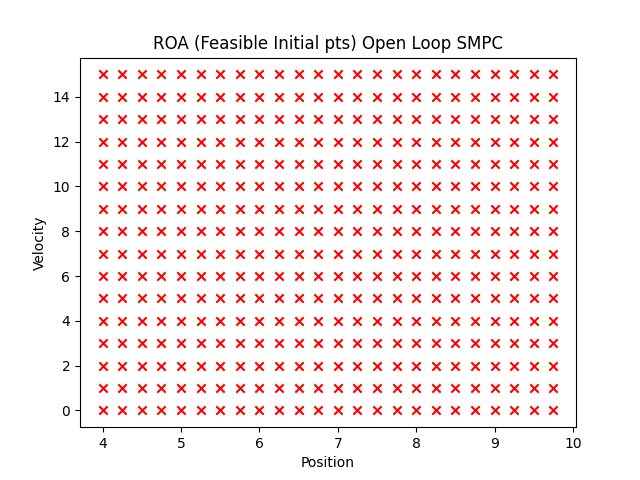}}
\subfigure[Full feedback policy]{\includegraphics[width=0.32\textwidth]{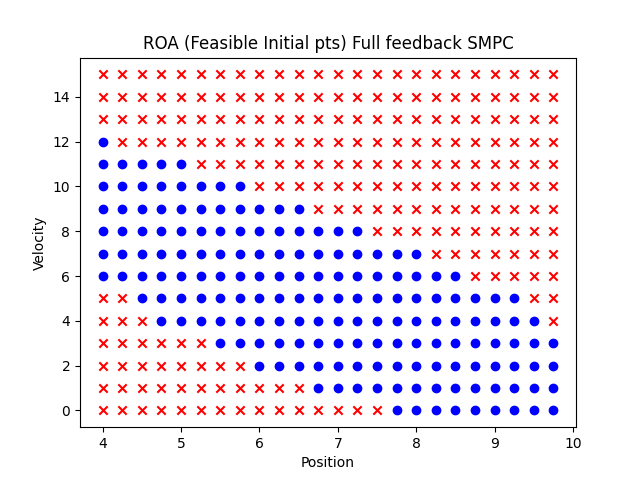}}
\subfigure[Proposed feedback policy]{\includegraphics[width=0.32\textwidth]{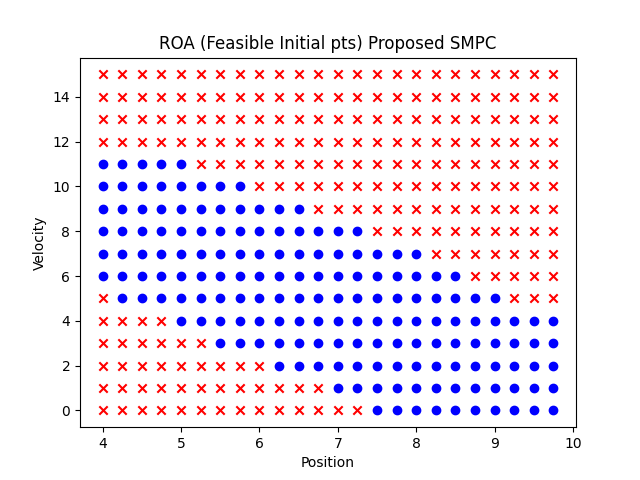}}


  \caption{Comparisons of ROA with three control policies }
  \label{fig:roa_all_4}
\end{figure}

\begin{figure}[h]

\centering
\subfigure[Open-loop policy]{\includegraphics[width=0.32\textwidth]{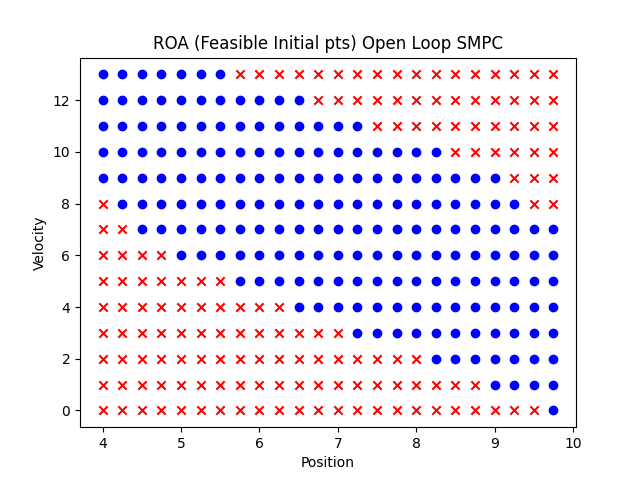}}
\subfigure[Full feedback policy]{\includegraphics[width=0.32\textwidth]{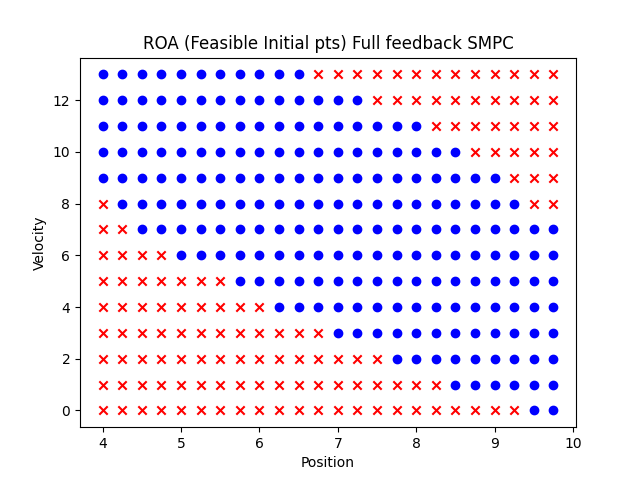}}
\subfigure[Proposed feedback policy]{\includegraphics[width=0.32\textwidth]{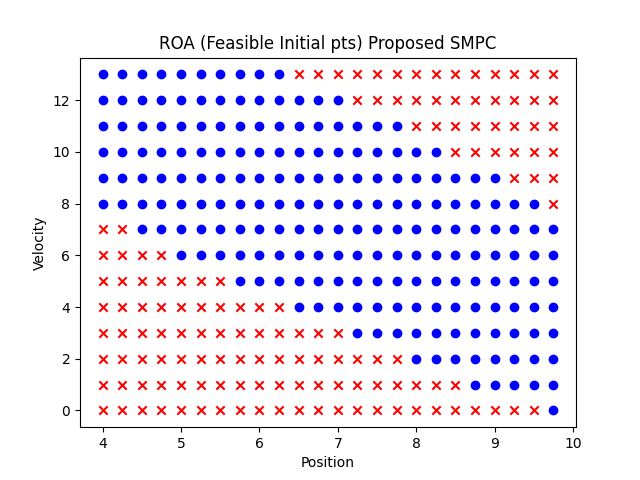}}



  \hspace{0em}
  \caption{Comparisons of ROA with three control policies for the less tight scenarios with $s_0^{\mathrm{env}}=21$}
  \label{fig:roa_all_7}
\end{figure}

Next, we run the controllers for the system from time step $0$ to time step $N_{\mathrm{task}}$ and carry out this process repeatedly for a total of $N_{\mathrm{trial}}$ trials. As a result, the accumulated closed-loop costs are illustrated in Fig. \ref{fig:clcost_over50}. 
Additionally, the gap distance between two vehicles are shown in the Fig. \ref{fig:gapdist}. In both graphs, the blue line represents the proposed policy, the cyan line represents the open-loop policy and the red line represents the full feedback policy.
The results reveal that the proposed method yields closed-loop costs that closely resemble those achieved by the full feedback policy, and notably surpass the costs incurred by the open-loop policies. Although our objective is about the nominal cost, the proposed approach has the similar average closed-loop costs with the full feedback policy and much better than the open-loop policy as you can see in the Table. \ref{table:average_costs}. Also, the gap distance graph in Fig. \ref{fig:gapdist} demonstrates a striking similarity in the outcomes derived from both the full feedback policy (red) and the proposed policy (blue). Note that the red lines and the blue lines are drawn in the wider range which is near the constraint boundary, while the cyan lines are drawn in the smaller range. The plot shows the behaviors of the proposed method are implicitly less conservative since they are much closer to the boundary of the constraints.

To check the proposed approach's robustness, we run the similar simulations for the different initial $s_0^{\mathrm{env}}$ which is $21$m. The results are shown in Fig. \ref{fig:clcost_over50_new}, Fig. \ref{fig:gapdist_new} and Table. \ref{table:average_costs_new}. Similar to the previous results, the proposed policy achieves nearly equivalent performance to the full feedback policy in terms of both ROA and the cumulative closed-loop costs.

\begin{figure}[h!]
    \centering
    \subfigure[Comparison of the closed-loop costs over 50 trials \label{fig:clcost_over50}]{\includegraphics[width=0.48\textwidth]{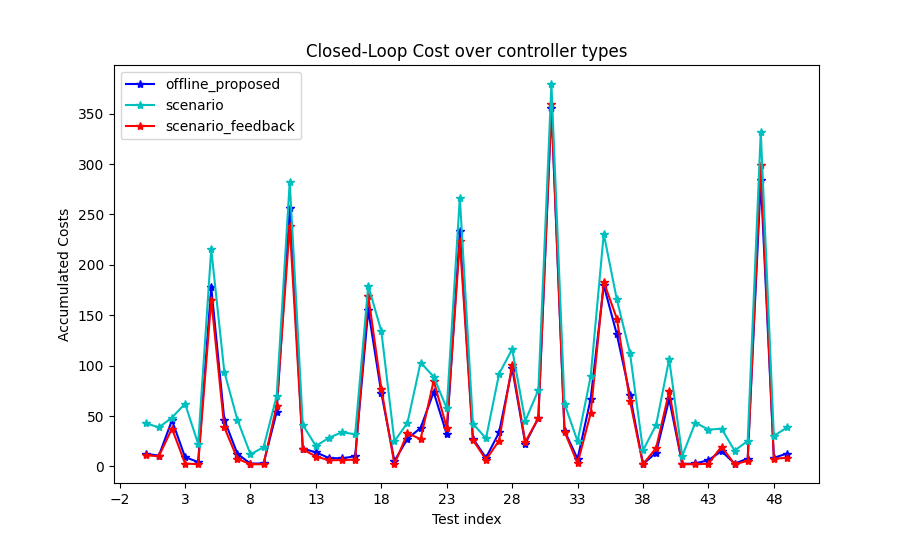}}
    \hspace{0.01\textwidth}
    \subfigure[Gap distance graph during 20 time steps with $s_0^{\mathrm{env}}=20$ \label{fig:gapdist}]{\includegraphics[width=0.48\textwidth]{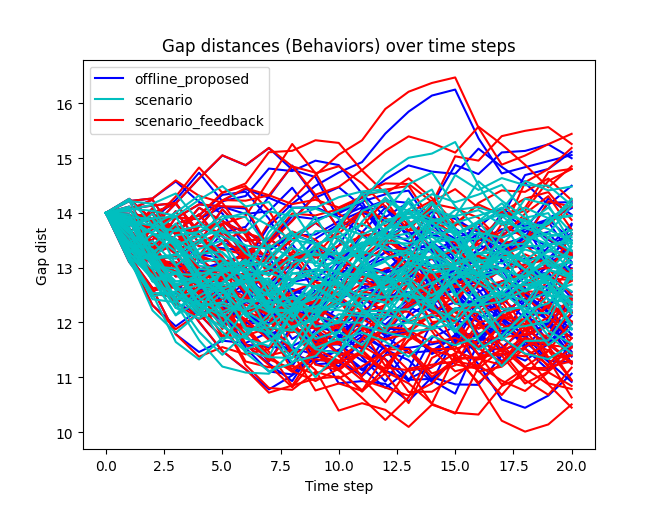}}
    \caption{(a) Comparison of the closed-loop costs over 50 trials, (b) Gap distance graph during 20 time steps with $s_0^{\mathrm{env}}=20$ \label{fig:gapdist}}
\end{figure}

\begin{figure}[h!]
    \centering
    \subfigure[Comparison of the closed-loop costs over 50 trials \label{fig:clcost_over50_new}]{\includegraphics[width=0.48\textwidth]{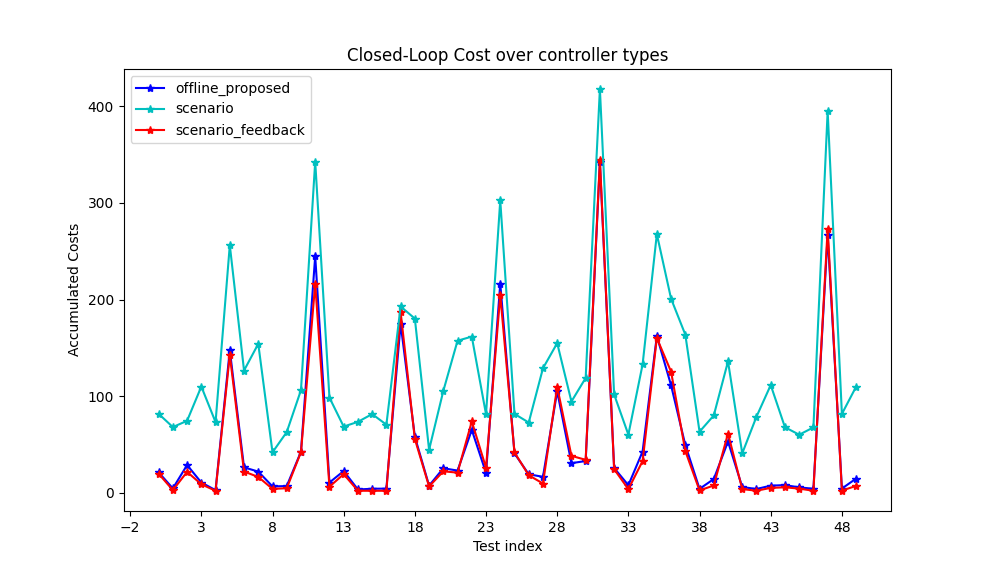}}
    \hspace{0.01\textwidth}
    \subfigure[Gap distance graph during 20 time steps with $s_0^{\mathrm{env}}=21$ \label{fig:gapdist_new}  ]{\includegraphics[width=0.48\textwidth]{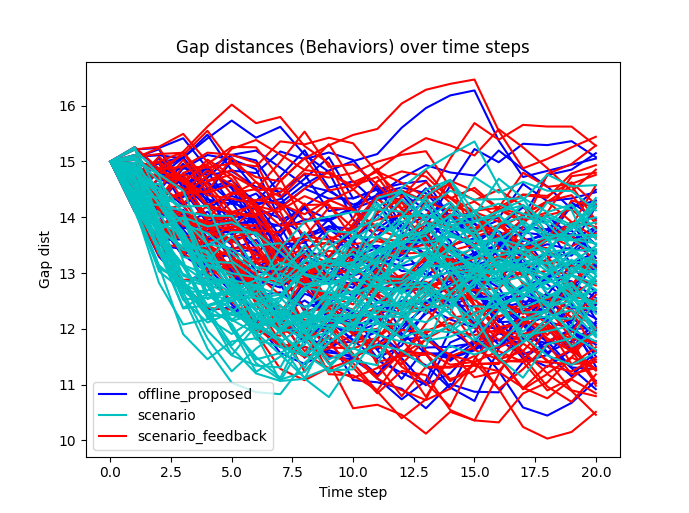}}
    \caption{(a) Comparison of the closed-loop costs over 50 trials, (b) Gap distance graph during 20 time steps with $s_0^{\mathrm{env}}=21$ \label{fig:gapdist}}
      
\end{figure}

\begin{table}[h!]
\centering
    \begin{tabular}{|c|| c| c |c| c|} 
 \hline
  & open-loop  & full feedback & proposed \\ [0.5ex] 
 \hline
 Costs &84.09& 56.02 & 56.85   \\  [0.5ex] 
 \hline

\end{tabular}
\caption{Average closed-loop costs over 50 trials with $s_0^{\mathrm{env}}=20$}
\label{table:average_costs}
\end{table}

\begin{table}[h!]
\centering
    \begin{tabular}{|c|| c| c |c| c|} 
 \hline
  & open-loop  & full feedback & proposed \\ [0.5ex] 
 \hline
 Costs &128.16 & 49.96 & 51.64   \\  [0.5ex] 
 \hline

\end{tabular}
\caption{Average closed-loop costs over 50 trials with $s_0^{\mathrm{env}}=21$}
\label{table:average_costs_new}
\end{table}

\vspace{25pt}
\subsection{Computation time}

The computation time of the proposed approach comprises: i) feature policy extraction via SVD in the offline phase, ii) computation of the approximate set in the offline phase, and iii) solving the modified MPC problem with truncated decision variables in the online phase. The optimization problems are formulated using the CasADi interface \cite{casadi} in Python, and solved with IPOPT \cite{ipopt}. All simulations and computations are carried out on a ThinkPad P53 with a 2.60 GHz Intel Core i7-9850H processor and 16GB RAM.
\begin{enumerate}[i)]
\item For $10^3, 10^4, 10^5$ samples, considered sufficient for chance constraint satisfaction \cite{calafiore2006scenario}, the offline computation times for performing SVD on the large constraint matrix and computing the approximate set are shown in Table \ref{table:offline_computation_time}. The complete offline computation for $10^5$ samples takes about 20 to 30 seconds.

\item For the online phase, using $10, 100, 1000$ samples, we compare the computation time between the full feedback policy, the open-loop policy, and the proposed method. The full feedback policy requires 0.1s to 0.2s with 100 samples, while the proposed MPC consistently completes in less than 0.01s due to pre-computed sampling processes in the offline phase (Table \ref{table:computation_time}). This highlights the significant speed advantage of the proposed method over the full feedback policy while maintaining control performance. The results justify using the proposed method in real-time applications, as it enables pre-computation of the truncated feature policy and the approximate set, achieving execution speeds at least 10X faster than the full feedback policy.
\end{enumerate}

\begin{remark}
    Although the offline computation time may seem substantial, the proposed method remain highly beneficial. Unlike high-dimensional parametrized policies such as neural network-based control policies, the proposed method preserves interpretability and tunability, while requiring less computation than extensive training processes. Also it still includes real-time optimization, enabling adaptive feedback gains in time, and is competitive with brute-force computation of all solutions. 
    Additionally, constructing the feasible set is performed offline only once. This step does not need to be repeated even if control costs change, as long as constraints are consistent. These features enhance the method’s practicality and efficiency, making it highly valuable for SMPC applications.
\end{remark}

\begin{table}[h!]
\centering
    \begin{tabular}{|c|| c| c |c| c|} 
 \hline
  \# of samples &1,000  & 10,000 &  100,000 \\ [0.1ex] 
 \hline
 SVD & 0.08& 0.3 & 2.2   \\  [0.1ex] 
 \hline
 Approx. set & 8.1 & 13.3 & 22.1 \\  [0.1ex] 
 \hline

\end{tabular}
\caption{Computation time of offline process (time unit:s)}
\label{table:offline_computation_time}
\end{table}

\begin{table}[h!]
\centering
    \begin{tabular}{|c|| c| c |c|} 
 \hline
 $\sharp$ of samples &10  & 100 & 1000\\ [0.1ex] 
 \hline
 open-loop  & 0.002 $\sim$ 0.009 & 0.031 $\sim$ 0.088 & 0.092 $\sim$ 0.171
  \\  [0.1ex] 
 \hline
 full feedback & 0.035 $\sim$ 0.049 & 0.116 $\sim$ 0.151 & 0.281 $\sim$ 1.118 \\ [0.1ex] 
  \hline
 proposed &0.003 $\sim$ 0.011 &0.003 $\sim$ 0.011  & 0.003 $\sim$ 0.011 \\ [0.1ex] 
 \hline
\end{tabular}
\caption{Computation time of online MPC for each policy (time unit: s)}
\label{table:computation_time}
\end{table}



\section{Conclusions}

We proposed a fast stochastic MPC for uncertain linear systems subject to chance constraints. Our approach involves extracting feature components from affine disturbance feedback policies using a stacked constraint matrix derived from multiple offline samples, without changing the constraint satisfaction. We then computed an approximate feasible set for the feature decision variables while maintaining a desired \(1 - \delta\) confidence level, allowing the MPC problem to be solved with reduced decision variables.
During online MPC implementation, the proposed approach solves the modified MPC problem with the offline computed sets and the truncated feature policy. With numerical simulations, we demonstrated that our approach achieves control performance comparable to full-feedback control policies in terms of Region of Attraction (ROA) and accumulated closed-loop costs. Importantly, the proposed method achieves computation speeds that are at least ten times faster, making it highly suitable for real-time applications.


\bibliography{mybib}

\end{document}